\def\llncs{1}
\def\fullpage{0}
\def\anonymous{0}
\def\authnote{1}
\def\notxfont{0}

\ifnum\llncs=1
	\documentclass[envcountsect,a4paper,runningheads,11pt]{llncs}
\else
	\documentclass[letterpaper,hmargin=1.05in,vmargin=1.05in]{article}
			\ifnum\fullpage=1
		\usepackage{fullpage}
		\fi
\fi

\usepackage[%
  colorlinks=true,
  citecolor=blue,
  pagebackref=true
]{hyperref}

\usepackage{amsmath, amsfonts, amssymb, mathtools,amscd}

\usepackage{amsthm}

\usepackage{lmodern}
\usepackage[T1]{fontenc}
\usepackage[utf8]{inputenc}

\usepackage{arydshln} 
\usepackage{url}
\usepackage{ifthen}
\usepackage{bm}
\usepackage{multirow}
\usepackage[dvips]{graphicx}
\usepackage[usenames]{color}
\usepackage{xcolor,colortbl} 
\usepackage{threeparttable}
\usepackage{comment}
\usepackage{paralist,verbatim}
\usepackage{cases}
\usepackage{booktabs}
\usepackage{braket}
\usepackage{cancel} 
\usepackage{ascmac} 
\usepackage{framed}
\usepackage{authblk}
\usepackage{pifont}
\usepackage{qcircuit}
\definecolor{darkblue}{rgb}{0,0,0.6}
\definecolor{darkgreen}{rgb}{0,0.5,0}
\definecolor{maroon}{rgb}{0.5,0.1,0.1}
\definecolor{dpurple}{rgb}{0.2,0,0.65}

\usepackage[capitalise,noabbrev]{cleveref}
\usepackage[absolute]{textpos}
\usepackage[final]{microtype}
\usepackage[absolute]{textpos}
\usepackage{everypage}
\DeclareMathAlphabet{\mathpzc}{OT1}{pzc}{m}{it}

\newtheoremstyle{thicktheorem}%
{\topsep}
{\topsep}
{\itshape}{}%
{\bfseries}%
{.}
{ }%
{\thmname{#1}\thmnumber{ #2}%
		\thmnote{ (#3)}%
}

\newtheoremstyle{remark}
{\topsep}
{\topsep}
	{}
	{}
	{}
	{.}
	{ }
	{\textit{\thmname{#1}}\thmnumber{ #2}
			\thmnote{ (#3)}%
	}

\ifnum\llncs=0
	\theoremstyle{thicktheorem}
	\newtheorem{theorem}{Theorem}[section]

	\newtheorem{definition}[theorem]{Definition}

	\theoremstyle{remark}
	
	\newtheorem{remark}[theorem]{Remark}

\else
\fi

\Crefname{MyClaim}{Claim}{Claims}

	\crefname{theorem}{Theorem}{Theorems}
	\crefname{assumption}{Assumption}{Assumptions}
	\crefname{construction}{Construction}{Constructions}
	\crefname{corollary}{Corollary}{Corollaries}
	\crefname{conjecture}{Conjecture}{Conjectures}
	\crefname{definition}{Definition}{Definitions}
	\crefname{exmaple}{Example}{Examples}
	\crefname{experiment}{Experiment}{Experiments}
	\crefname{counterexample}{Counterexample}{Counterexamples}
	\crefname{lemma}{Lemma}{Lemmata}
	\crefname{observation}{Observation}{Observations}
	\crefname{proposition}{Proposition}{Propositions}
	\crefname{remark}{Remark}{Remarks}
	\crefname{claim}{Claim}{Claims}
	\crefname{fact}{Fact}{Facts}
	\crefname{note}{Note}{Notes}

\ifnum\llncs=1
 \crefname{appendix}{App.}{Appendices}
 \crefname{section}{Sec.}{Sections}
\else
\fi

\ifnum\llncs=1
\pagestyle{plain}
\renewcommand*{\backref}[1]{}
\else
	\renewcommand*{\backref}[1]{(Cited on page~#1.)}
	\ifnum\notxfont=1
	\else
		\usepackage{newtxtext}
	\fi
\fi
\ifnum\authnote=0  
\newcommand{\mor}[1]{}
\newcommand{\minki}[1]{}
\newcommand{\takashi}[1]{}

\else
\newcommand{\mor}[1]{$\ll$\textsf{\color{red} Tomoyuki: { #1}}$\gg$}
\newcommand{\takashi}[1]{$\ll$\textsf{\color{orange} Takashi: { #1}}$\gg$}
\newcommand{\minki}[1]{$\ll$\textsf{\color{darkgreen} Minki: { #1}}$\gg$}

\fi



\newcommand{\NP}{\mathbf{NP}}















\newcommand{\cA}{\mathcal{A}}

\newcommand{\cC}{\mathcal{C}}

\newcommand{\cK}{\mathcal{K}}

\newcommand{\cP}{\mathcal{P}}

\newcommand{\cV}{\mathcal{V}}

\newcommand{\cX}{\mathcal{X}}
\newcommand{\cY}{\mathcal{Y}}



\def\makeuppercase#1{
\expandafter\newcommand\csname tl#1\endcsname{\widetilde{#1}}
}

\def\makelowercase#1{
\expandafter\newcommand\csname tl#1\endcsname{\widetilde{#1}}
}





\newcommand{\secp}{\lambda}


\newcommand{\A}{\entity{A}}
\newcommand{\B}{\entity{B}}




\newcommand*{\sk}{\keys{sk}}
\newcommand*{\pk}{\keys{pk}}

\newcommand*{\td}{\keys{td}}

\newcommand*{\keys}[1]{\mathsf{#1}}

\newcommand*{\algo}[1]{\ensuremath{\mathsf{#1}}}

\newcommand*{\entity}[1]{\mathcal{#1}}



\newenvironment{boxfig}[2]{\begin{figure}[#1]\fbox{\begin{minipage}{0.97\linewidth}
                        \vspace{0.2em}
                        \makebox[0.025\linewidth]{}
                        \begin{minipage}{0.95\linewidth}
            {{
                        #2 }}
                        \end{minipage}
                        \vspace{0.2em}
                        \end{minipage}}}{\end{figure}}



\newcommand{\bit}{\{0,1\}}






\newcommand{\Gen}{\algo{Gen}}

\newcommand{\Inv}{\algo{Inv}}
\newcommand{\Ver}{\algo{Ver}}
\newcommand{\Comm}{\algo{Comm}}
\newcommand{\comm}{\algo{comm}}
\newcommand{\st}{\algo{st}}
\newcommand{\Ans}{\algo{Ans}}

\newcommand{\PoQ}{\algo{PoQ}}

\newcommand{\TD}{\algo{TD}}

\newcommand{\Eval}{\algo{Eval}}




\newcommand{\negl}{{\mathsf{negl}}}






\newcommand{\poly}{{\mathrm{poly}}}

\makeatletter
\DeclareRobustCommand
  \myvdots{\vbox{\baselineskip4\p@ \lineskiplimit\z@
    \hbox{.}\hbox{.}\hbox{.}}}
\makeatother

\title{Proofs of Quantumness from Trapdoor Permutations}

\ifnum\anonymous=1
\author{\empty}\institute{\empty}
\else
%
%
\ifnum\llncs=1
\author{
	Tomoyuki Morimae\inst{1} \and Takashi Yamakawa\inst{1,2}
}
\institute{
	Yukawa Institute for Theoretical Physics, Kyoto University, Kyoto, Japan \and NTT Social Informatics Laboratories, Tokyo, Japan
}
\else
%
%
\author[1]{Tomoyuki Morimae}
\author[1,2]{\hskip 1em Takashi Yamakawa}
\affil[1]{{\small Yukawa Institute for Theoretical Physics, Kyoto University, Kyoto, Japan}\authorcr{\small tomoyuki.morimae@yukawa.kyoto-u.ac.jp}}
\affil[2]{{\small NTT Social Informatics Laboratories, Tokyo, Japan}\authorcr{\small takashi.yamakawa.ga@hco.ntt.co.jp}}

\fi 
\fi

\date{}

\begin{document}

\maketitle

\begin{abstract}
Assume that Alice can do only classical probabilistic polynomial-time computing
while Bob can do quantum polynomial-time computing. Alice and Bob communicate over only
classical channels, and finally Bob gets a state $|x_0\rangle+|x_1\rangle$ with
some bit strings $x_0$ and $x_1$.
Is it possible that Alice can know $\{x_0,x_1\}$ but Bob cannot?
Such a task, called {\it remote state preparations}, is indeed possible under some 
complexity assumptions, 
and 
is bases of many quantum cryptographic primitives such as
proofs of quantumness, (classical-client) blind quantum computing, (classical) verifications of quantum computing,
and quantum money.
A typical technique to realize remote state preparations is to use 2-to-1 trapdoor collision resistant hash functions:
Alice sends a 2-to-1 trapdoor collision resistant hash function $f$ to Bob, and Bob evaluates it coherently,
i.e., Bob generates $\sum_x|x\rangle|f(x)\rangle$.
Bob measures the second register to get the measurement result $y$,
and sends $y$ to Alice. Bob's post-measurement state is $|x_0\rangle+|x_1\rangle$,
where $f(x_0)=f(x_1)=y$. With the trapdoor, Alice can learn $\{x_0,x_1\}$ from $y$,
but due to the collision resistance, Bob cannot. This
Alice's advantage can be leveraged to realize the quantum cryptographic primitives listed above.
It seems that the collision resistance is essential here.
In this paper, surprisingly, we show that the collision resistance is not necessary for a restricted case:
we show that (non-verifiable) remote state preparations of $|x_0\rangle+|x_1\rangle$
secure against {\it classical} probabilistic polynomial-time Bob can be constructed from classically-secure
(full-domain) trapdoor permutations.
Trapdoor permutations are not likely to imply the collision resistance, because 
black-box reductions from collision-resistant hash functions to trapdoor permutations are known to be impossible.
As an application of our result, we construct proofs of quantumness from
classically-secure (full-domain) trapdoor permutations. 
\end{abstract}

\section{Introduction}
Let us consider a two-party interactive protocol between Alice and Bob.
Alice can do only classical probabilistic polynomial-time computing
while Bob can do quantum polynomial-time computing. Alice and Bob communicate over only
classical channels. After the interaction, Alice finally outputs a pair 
$\{x_0,x_1\}$
of $n$-bit strings
$x_0,x_1\in\bit^n$. 
If Bob behaves honestly, he finally outputs the $n$-qubit state $|x_0\rangle+|x_1\rangle$.\footnote{For simplicity, in this paper, we often omit the normalization factors of quantum states.}
On the other hand, no malicious Bob can learn $\{x_0,x_1\}$. 
Such a task, called {\it remote state preparations}~\cite{DK16,AC:CCKW19,FOCS:GheVid19}\footnote{Generally speaking, remote state preparations are the task that a classical Alice
delegates preparations of quantum states to quantum Bob over a classical channel in such a way that Bob cannot learn which states he is generating. \cite{AC:CCKW19} considered remote state preparations of
random single-qubit states for the applications to classical-client blind quantum computing. 
In this paper, on the other hand, we focus on remote state preparations of an equal-weight superposition $|x_0\rangle+|x_1\rangle$ of two $n$-qubit computational basis states. Moreover, note that there are also {\it verifiable} remote state preparations~\cite{FOCS:GheVid19}
where Alice can check whether Bob has generated correct states or not. Verifiability is not necessary for some applications such as classical-client blind quantum computing,
but seems to be necessary for some applications such as classical verifications of quantum computing. In this paper, we do not consider the verifiability.},
is indeed possible under some complexity assumptions,
and is bases of many quantum cryptographic primitives such as
proofs of quantumness~\cite{JACM:BCMVV21}, 
(classical-client) blind quantum computing~\cite{FOCS:BroFitKas09,AC:CCKW19,AC:BCCKLMW20}, 
(classical) verifications of quantum computing~\cite{FOCS:Mahadev18a,FOCS:GheVid19},
and quantum money~\cite{CoRR:RadSat19}.
In fact, if Alice can generate a quantum state $|x_0\rangle+|x_1\rangle$ and send it
to Bob over a quantum channel, Alice can enjoy several advantages over Bob.
For example, 
Bob cannot know the complete classical description of the state, 
he cannot clone the state, and he has to disturb the state if he measures it, etc.
Such an inequivalence between Alice and Bob is clearly useful for quantum cryptography.
Remote state preparations somehow ``simulate'' such situations, and can replace quantum channels
with classical channels for some applications.

A typical technique to realize remote state preparations is to use 2-to-1 trapdoor collision resistant hash functions~\cite{JACM:BCMVV21}\footnote{A function $f:\cX\to\cY$ is 2-to-1 if
$|\{x\in \cX~|~f(x)=y\}|=2$ for all $y\in\cY$. A function $f$ is called a trapdoor collision resistant hash function if given $f$ it is hard to find
$x$ and $x'$ such that $f(x)=f(x')$, but it becomes easy if a trapdoor is available.}: 
Alice sends a 2-to-1 trapdoor collision resistant hash function $f$ to Bob, and Bob evaluates it coherently,
i.e., Bob generates $\sum_x|x\rangle|f(x)\rangle$.
Bob measures the second register to get the measurement result $y$,
and sends $y$ to Alice. Bob's post-measurement state is $|x_0\rangle+|x_1\rangle$,
where $f(x_0)=f(x_1)=y$. With the trapdoor, Alice can learn $\{x_0,x_1\}$ from $y$,
but due to the collision resistance, Bob cannot. This
Alice's advantage can be leveraged to realize the quantum cryptographic primitives listed above.

\subsection{Our Results}
The collision resistance seems to be essential for remote state preparations (and many other quantum cryptographic primitives over classical channels.)
In this paper, surprisingly, we show that the collision resistance is not necessary for a restricted case.
We show the following result.
\begin{theorem}
\label{theorem:result1}
(Non-verifiable) remote state preparations of $|x_0\rangle+|x_1\rangle$
secure against {\bf classical} probabilistic polynomial-time Bob can be constructed from classically-secure (full-domain) trapdoor permutations.
\end{theorem}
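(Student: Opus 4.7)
The plan is to realize RSP via a single-round protocol based on two independent trapdoor permutations, playing the role of a ``keyed claw-free pair'' for which the usual coherent quantum evaluation suffices for correctness but only \emph{classical} one-wayness is needed for security. Concretely, Alice samples $(f_0, \mathsf{td}_0)$ and $(f_1, \mathsf{td}_1)$, both full-domain TDPs on $\{0,1\}^n$, and sends $(f_0, f_1)$ to Bob. An honest quantum Bob prepares the uniform superposition over $\{0,1\}\times\{0,1\}^n$, coherently computes $g(b,x) := f_b(x)$ into an output register, measures that register to obtain $y$, and returns $y$. Because each $f_b$ is a permutation, $g$ is exactly $2$-to-$1$ with $g^{-1}(y) = \{(0, f_0^{-1}(y)),(1, f_1^{-1}(y))\}$, so Bob's post-measurement state is $|0, x_0\rangle + |1, x_1\rangle$ with $x_b = f_b^{-1}(y)$; Alice recovers $\{(0,x_0),(1,x_1)\}$ from $y$ using her trapdoors. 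Correctness is immediate from this $2$-to-$1$ structure.

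The substantive part is the classical security argument: no classical PPT $A$, given $(f_0, f_1)$, should be able to output a claw $(y, x_0', x_1')$ with $f_0(x_0') = f_1(x_1') = y$ except with negligible probability. I would reduce this to the classical one-wayness of a single TDP. The guiding intuition is that in any classical strategy, $A$ effectively chooses at most one of $x_0', x_1'$ ``for free'' (e.g., by sampling $x_0'$ and setting $y := f_0(x_0')$); the remaining preimage $x_1' = f_1^{-1}(y)$ is then an honest inversion of $f_1$ at a uniformly distributed image point, since $f_0$ is a permutation and $x_0'$ is uniform. A baseline reduction would flip $b \gets \{0,1\}$, embed the challenge TDP $f^*$ as $f_b$, sample $f_{1-b}$ honestly together with its trapdoor, run $A(f_0, f_1)$, and output $x_b'$; this symmetric placement loses only a factor of $2$ in advantage.

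The main obstacle is bridging the gap between $A$'s self-chosen $y$ and the challenger's uniformly random $y^*$: the baseline reduction only recovers a preimage of $f^*$ at $A$'s $y$, not at $y^*$. To close this gap, I would use the trapdoor $\mathsf{td}_{1-b}$ of the non-challenge side to couple $A$'s execution to $y^*$. Concretely, the reduction precomputes $x_{1-b}^* := f_{1-b}^{-1}(y^*)$ and then arranges, via a rewinding or forking-style argument on $A$'s random tape, that the ``freely chosen'' preimage on the trapdoor side is forced to equal $x_{1-b}^*$ on some fork, thereby forcing $y = f_{1-b}(x_{1-b}^*) = y^*$ and yielding $x_b' = (f^*)^{-1}(y^*)$.

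Because a fully black-box reduction from claw-finding to TDP one-wayness would contradict the black-box impossibility of CRHFs from TDPs cited in the introduction, this coupling step must genuinely exploit that $A$ is classical (so that its random tape can be fixed and replayed without quantum state disturbance) and would fail against a coherent quantum adversary. Identifying precisely how classicality is used, and quantifying the resulting polynomial security loss, is the step I expect to be the main technical hurdle; once in place, combining it with the correctness analysis above yields the theorem.
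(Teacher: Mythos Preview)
Your construction has correct correctness, but the security argument has a genuine gap that cannot be closed along the lines you sketch. The property you need is precisely that two independently sampled full-domain TDPs $(f_0,f_1)$ form a classically claw-free pair. As you yourself observe, claw-freeness implies collision resistance, and the black-box separations you cite rule out black-box reductions from collision resistance to TDP one-wayness. Those separations are stated and proved for \emph{classical} adversaries and classical reductions; ``$A$ is classical'' is already assumed in the barrier, so it cannot be the escape hatch. Your proposed reduction uses $A$ only as a black box and the TDPs only through their interfaces, so it is exactly the kind of reduction the impossibility forbids. Concretely, the ``coupling via rewinding/forking'' step does not do what you need: rewinding yields multiple outputs of $A$ on correlated tapes, but it never lets you \emph{force} $A$'s chosen $y$ to equal an externally fixed target $y^*$. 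Running $A$ repeatedly just produces claws at $A$-chosen points, which gives no handle on $(f^*)^{-1}(y^*)$.

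The paper's proof avoids this obstacle by abandoning the non-interactive two-TDP idea entirely. It uses a \emph{single} TDP $f_k$ together with the NOVY interactive hashing protocol, run coherently: Alice sends linear constraints $h_1,\ldots,h_{n-1}$ one round at a time, Bob measures $h_j\cdot f_k(x)$ in superposition and returns the bit $c_j$, and the state collapses round by round to $|x_0\rangle+|x_1\rangle$ where $f_k(x_0),f_k(x_1)$ are the two solutions of the linear system. Classical security is then inherited verbatim from NOVY's computational-binding theorem. The crucial point is that NOVY's reduction rewinds \emph{across the $n-1$ interactive rounds}, resending fresh $h_j$'s so that one of the two final solutions can be steered toward a uniformly random challenge; this multi-round structure is exactly what your single-message protocol lacks, and is what lets the reduction to one-wayness succeed without ever proving anything like claw-freeness.
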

Here, non-verifiable remote state preparations are remote state preparations that are blind but not verifiable,
i.e., no malicious Bob can learn 
which states he is generating, but Alice cannot verify whether Bob has generated correct states or not.
(A formal definition is given in \cref{def:RSP}.)
A classically-secure trapdoor permutation is a permutation $f:\cX\to\cX$ such that inverting it is hard for classical
probabilistic polynomial-time adversaries, but
it is easy if a trapdoor is available. Full-domain means that $\cX=\bit^n$.
(The formal definition is given in \cref{def:TDP}.)
A proof of \cref{theorem:result1} is given in \cref{sec:proof1}.

Trapdoor permutations are not likely to imply the collision resistance, because 
black-box reductions from collision-resistant hash functions to trapdoor permutations are known to be impossible~\cite{HHRS15,AC:HosYam20}.
Classically-secure full-domain trapdoor permutations can be instantiated
with the hardness of factoring~\cite{JACM:BM92,JC:GolRot13}.

We emphasize that our remote state preparations in \cref{theorem:result1} are proven to be 
secure against only {\it classical} Bob,
which unfortunately restricts the applications of our result.
We do not know how to achieve the quantum security. This is an important open problem.
(For more discussion, see \cref{sec:open}.)

As an application of \cref{theorem:result1}, we construct proofs of quantumness.
\begin{theorem}
\label{theorem:result2}
Proofs of quantumness can be constructed from classically-secure (full-domain) trapdoor permutations.
\end{theorem}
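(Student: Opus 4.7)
The plan is to follow the standard two-basis template of Mahadev-style proofs of quantumness \cite{JACM:BCMVV21}, but replacing the quantum coherent evaluation of a 2-to-1 trapdoor claw-free function by the non-verifiable remote state preparation provided by \cref{theorem:result1}. Concretely, the proof-of-quantumness protocol is defined as follows. First, the verifier (Alice) and the prover (Bob) run the remote state preparation protocol of \cref{theorem:result1}, so that honest quantum Bob ends up holding $|x_0\rangle + |x_1\rangle$ and Alice learns the set $\{x_0,x_1\}\subseteq \bit^n$. Then Alice samples a challenge bit $c\sample\bit$ and sends it to Bob. If $c=0$, Bob measures his register in the standard basis and returns $x$; Alice accepts iff $x\in\{x_0,x_1\}$. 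If $c=1$, Bob measures in the Hadamard basis and returns $d\in\bit^n$; Alice accepts iff $d\neq 0$ and $d\cdot(x_0\oplus x_1)=0$.

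Completeness is immediate: an honest quantum prover who actually holds $|x_0\rangle+|x_1\rangle$ passes the standard-basis test with probability $1$ and the Hadamard test with probability $1-2^{-(n-1)}$, since a Hadamard-basis measurement of $|x_0\rangle+|x_1\rangle$ yields a uniformly random $d$ in the hyperplane $\{d : d\cdot(x_0\oplus x_1)=0\}$ of size $2^{n-1}$. Hence honest quantum Bob is accepted with probability at least $1 - 2^{-n}$ averaged over the challenge.

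For soundness, I plan to argue the contrapositive: any classical PPT prover $\A$ that makes the verifier accept with probability noticeably larger than $1/2$ can be converted into a classical PPT adversary $\B$ that breaks the RSP security of \cref{theorem:result1}, i.e.\ outputs the set $\{x_0,x_1\}$ from the RSP transcript. The reduction $\B$ plays the role of Bob in the RSP execution by simply forwarding messages to and from $\A$, so that after the RSP phase, $\A$ is in some classical state (determined by its randomness and the transcript). Because $\A$ is classical, $\B$ can freely rewind $\A$ to this state and feed it challenges of its choice. An averaging argument over the challenge bit shows that, with non-negligible probability over the RSP transcript, $\A$ answers both challenges correctly with non-negligible probability; $\B$ then rewinds $\A$ once with $c=0$ to obtain some $x\in\{x_0,x_1\}$, and $\poly(n)$ times with $c=1$ to collect samples $d_1,d_2,\dots$ each satisfying $d_i\cdot (x_0\oplus x_1)=0$. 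A standard linear-algebra / coupon-collector style analysis shows that, with overwhelming probability after polynomially many rewindings, the $d_i$ span an $(n-1)$-dimensional subspace of $\F_2^n$, whose unique nonzero orthogonal vector must equal $s\defeq x_0\oplus x_1$. Combining $x$ and $s$, $\B$ outputs $\{x,x\oplus s\}=\{x_0,x_1\}$, contradicting the RSP security of \cref{theorem:result1}.

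The step I expect to be the main technical obstacle is the soundness analysis, specifically making the rewinding extraction of $s$ go through uniformly when $\A$'s success is only mildly above $1/2$: one has to carefully condition on a ``good'' RSP transcript on which $\A$ simultaneously answers both challenges well, and then argue that the distribution over valid $d$'s induced by $\A$ has sufficient linear-algebraic support to span the hyperplane orthogonal to $s$ in polynomially many tries, rather than being concentrated on a proper subspace. This is handled by the usual average-argument (Markov / splitting-lemma) together with the observation that if $\A$'s $d$-outputs always lay in a strictly smaller subspace, $\A$ would be passing the Hadamard test with probability bounded away from $1$ on a noticeable fraction of executions, which is incompatible with its overall success probability. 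Modulo this, everything else is a routine classical reduction.
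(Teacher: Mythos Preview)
Your protocol is not sound, and the flaw is not just in the analysis but in the protocol itself. First, the claimed soundness threshold of $1/2$ is too low: a classical Bob who runs the RSP phase by picking $x$, setting $y=f_k(x)$, and answering $c_j=h_j\cdot y$ already knows one element of $\{x_0,x_1\}$, so he passes the $c=0$ test with probability $1$; on $c=1$ a uniformly random nonzero $d$ lands in $s^\perp$ with probability $\approx 1/2$, giving overall $\approx 3/4$. More seriously, even $3/4$ is not achievable in general. Consider the full-domain trapdoor permutation $F_k(b,x)\coloneqq (b,f_k(x))$ on $\bit^{n/2}\times\bit^{n/2}$ built from any TDP $f_k$ on $\bit^{n/2}$: the first $n/2$ bits of any preimage are revealed in the clear, so from the transcript Bob learns the first $n/2$ bits of $s=x_0\oplus x_1$ and can deterministically output a nonzero $d$ supported on those coordinates with $d\cdot s=0$. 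This classical prover passes both tests with probability $1$, yet $F_k$ is still a classically-secure full-domain TDP and the RSP security (Bob cannot output \emph{both} $x_0,x_1$) is intact. Your rewinding extractor cannot help here: once the post-RSP state (transcript and coins) is fixed, a deterministic $\A$ returns the \emph{same} $d$ on every rerun of the $c=1$ branch, so you collect a single linear constraint, not a spanning set. The ``observation'' that a low-dimensional support of $d$'s forces sub-$1$ Hadamard success is exactly what the counterexample refutes.

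This gap is precisely why the paper does \emph{not} use the bare Hadamard test but instead follows the CHSH-style test of \cite{NatPhys:KMCVY22}: after the RSP phase the verifier sends a \emph{fresh random} $r\in\bit^n$, the honest prover collapses to a BB84 state encoding $(r\cdot x_0,\,d\cdot s)$, and a second random challenge $v_2$ selects a measurement basis. The soundness reduction rewinds only over $v_2$ (with $r$ fixed) to obtain $\eta_{1,0}\oplus\eta_{1,1}=r\cdot s$, which is a Goldreich--Levin predictor for $s$; GL then recovers $s$, and combining with the $v_1=0$ preimage yields $\{x_0,x_1\}$, contradicting RSP security. The fresh verifier randomness $r$ is the missing ingredient in your proposal: it converts the single unusable constraint ``$d\cdot s=0$'' into a family of inner products $r\cdot s$ that GL can invert.
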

Its proof is given in \cref{sec:proof2}.
Proofs of quantumness are two-party protocols between a probabilistic polynomial-time verifier and a prover. 
A quantum polynomial-time prover can make the verifier accept with high probability, but
no probabilistic polynomial-time prover is accepted by the verifier except for a negligible probability.
(For a formal definition of proofs of quantumness, see \cref{def:PoQ}.)
The first construction of proofs of quantumness \cite{JACM:BCMVV21} is based on 
trapdoor injective claw-free functions 
with the special property called adaptive-hardcore-bit property.
Here, an injective claw-free means that given a pair $(f_0,f_1)$ of injective functions, it is hard to find
$(x_0,x_1)$ such that $f_0(x_0)=f_1(x_1)$.
The adaptive-hardcore-bit property
roughly means that given $(f_0,f_1)$,
it is hard to
find one $x_b$ of a claw $(x_0,x_1)$
and $d$ such that $d\cdot(x_0\oplus x_1)=0$ at the same time.
It is easy to see that injective claw-free functions imply the collision resistance.
\footnote{Let us define a function $g$ by $g(0x)\coloneqq f_0(x)$ and $g(1x)\coloneqq f_1(x)$. Assume that
a collision $(\alpha,\beta)$ of $g$ is easily found, i.e., $g(\alpha)=g(\beta)$. Then,
due to the injectivity of $f_0$ and $f_1$, the first bit of $\alpha$ and $\beta$ should be different.
Without loss of generality, assume that $\alpha=0x_0$ and $\beta=1x_1$. Then,
$(x_0,x_1)$ is a claw of $(f_0,f_1)$.}
A recent paper \cite{NatPhys:KMCVY22} has improved the result of \cite{JACM:BCMVV21} by removing
the necessity of the adaptive-hardcore-bit property.
However, it still uses
2-to-1 trapdoor collision resistant hash functions.
Our \cref{theorem:result2} removes the necessity of the collision resistance of \cite{NatPhys:KMCVY22}.

\subsection{Technical Overview}
Our construction of remote state preparations from trapdoor permutations (\cref{theorem:result1})
is based on the statistically-hiding and computationally-binding commitment scheme from one-way permutations~\cite{C:NOVY92},
which is explained as follows.
Let $f:\bit^n\to\bit^n$ be a one-way permutation.
\begin{enumerate}
    \item The sender of the commitment scheme chooses $x\gets\bit^n$, and computes $y\coloneqq f(x)$.
    \item The receiver of the commitment scheme chooses $h_j\gets 0^{j-1}1\bit^{n-j}$ for each $j=1,2,...,n-1$.
    \item The receiver and the sender repeat the following procedure for $j=1,2,...,n-1$:
    \begin{enumerate}
    \item The receiver sends $h_j$ to the sender.
    \item The sender returns the value $c_j\coloneqq h_j\cdot y$ to the receiver.\footnote{For two bit strings $a,b\in\bit^n$, $a\cdot b$ is the bitwise inner product, i.e., $a\cdot b\coloneqq\bigoplus_{j=1}^n a_jb_j$.}
    \end{enumerate}
\item
The receiver and the sender finally obtain the system of linear equations $\{h_j\cdot y=c_j\}_{j=1}^{n-1}$ that has two
solutions $y_0,y_1\in\bit^n$, where $y_0$ is the lexicographically smaller one.
Let $c\in\bit$ be such that $y_c=y$. 
The sender sends $b\oplus c$ to the receiver as the commitment of the bit $b\in\bit$.
\item
The opening for the commitment is $x$ and $b$.
\end{enumerate}
It is shown in \cite{C:NOVY92} that if a probabilistic polynomial-time sender can find both $x_0$ and $x_1$ such that $f(x_0)=y_0$ and $f(x_1)=y_1$
with a non-negligible probability, then
a probabilistic polynomial-time adversary that breaks the security of the one-way permutation $f$ can be constructed, which shows the (classical) computational binding of the scheme.

What happens if this NOVY's interactive hashing is run coherently? Namely, let us consider the following ``quantum version'' of the NOVY's interactive hashing:
\begin{enumerate}
\item A quantum polynomial-time Bob prepares $\sum_{x\in\bit^n}|x\rangle$. 
\item A probabilistic polynomial-time Alice chooses $h_j\gets 0^{j-1}1\bit^{n-j}$ for each $j=1,2,...,n-1$.
\item Alice sends $h_1$ to Bob.
\item Bob generates 
\begin{eqnarray*}
\sum_{x\in \bit^n}|x\rangle|h_1\cdot f(x)\rangle, 
\end{eqnarray*}
measures the second register to get the measurement result $c_1\in\bit$,
and sends $c_1$ to Alice.
The post-measurement state is
\begin{eqnarray*}
\sum_{x\in \bit^n:h_1\cdot f(x)=c_1}|x\rangle. 
\end{eqnarray*}
\item Alice sends $h_2$ to Bob.
\item Bob generates 
\begin{eqnarray*}
\sum_{x\in \bit^n:h_1\cdot f(x)=c_1}|x\rangle|h_2\cdot f(x)\rangle, 
\end{eqnarray*}
measures the second register to get the measurement result $c_2\in\bit$,
and sends $c_2$ to Alice.
The post-measurement state is
\begin{eqnarray*}
\sum_{x\in \bit^n:h_1\cdot f(x)=c_1, h_2\cdot f(x)=c_2}|x\rangle. 
\end{eqnarray*}
\item
If they repeat the above procedure for $j=3,4,...,n-1$,
Bob finally possesses the state $|x_0\rangle+|x_1\rangle$, where $f(x_0)=y_0$, $f(x_1)=y_1$, and
$(y_0,y_1)$ is the two solutions of 
$\{h_j\cdot y=c_j\}_{j=1}^{n-1}$.
\end{enumerate}
By the (classical) computational-binding of the NOVY's commitment scheme, no probabilistic polynomial-time Bob can learn both $x_0$ and $x_1$ at the same time with non-negligible probability.
If $f$ is a trapdoor permutation, Alice can compute both $x_0$ and $x_1$,
and it is clear that the existence of the trapdoor does not degrade the security against Bob.

In this way, we can construct remote state preparations of $|x_0\rangle+|x_1\rangle$
secure against classical Bob
from trapdoor permutations.
Proofs of quantumness can be constructed from it based on a similar idea of \cite{NatPhys:KMCVY22}, 
which is a proof of \cref{theorem:result2}:
The verifier and the prover first run remote state preparations of $|x_0\rangle+|x_1\rangle$.
Then, with probability 1/2, the verifier asks the prover to measure the state in the computational basis.
If the prover returns a correct $x_0$ or $x_1$, the verifier accepts.
With probability 1/2,
the verifier chooses $r\gets\bit^n$ and sends it to the prover.
The prover generates the state
$|r\cdot x_0\rangle|x_0\rangle+|r\cdot x_1\rangle|x_1\rangle$, and measures the second register in the
Hadamard basis. If the measurement result is $d\in\bit^n$, the post-measurement state is
$|r\cdot x_0\rangle+(-1)^{d\cdot(x_0\oplus x_1)}|r\cdot x_1\rangle$,
i.e., one of the BB84 states. Then, runing the CHSH game on it leads to proofs of quantumness:
the quantum polynomial-time prover can output the correct measurement result with high probability, but no probabilistic polynomial-time prover can output the correct measurement result
except for a small probability.

\subsection{Open Problems}
\label{sec:open}
Our novel idea of running the NOVY's interactive hashing coherently will have many other interesting applications.
Let us summarize several open problems.

\paragraph{\bf Weakening assumptions.}
Our constructions of remote state preparations and proofs of quantumness are based on
the full-domain trapdoor permutations. One important open problem is whether the assumption can be weakened or not.

First, can we remove the full-domain property?
Though the factoring-based constructions can be made full-domain~\cite{JACM:BM92,JC:GolRot13}, it is not true in general. 
For example, the construction of trapdoor permutations based on indistinguishability obfuscation (iO)~\cite{TCC:BitPanWic16} is not full-domain. 
In many applications of trapdoor permutations like oblivious transfers~\cite{EGL85} and non-interactive zero-knowledge~\cite{FLS99}, the full-domain property can be weakened to a property called the \emph{doubly enhanced} property~\cite{JC:GolRot13}.
Can we replace full-domain trapdoor permutations in our constructions with (non-full-domain) doubly enhanced trapdoor permutations? 
If this is possible, then we would obtain proofs of quantumness from iO and one-way functions since the iO-based trapdoor permutation~\cite{TCC:BitPanWic16} satisfies the doubly enhanced property.
(Or, it is an interesting open problem whether remote state preparations or proofs of quantumness can be directly constructed from iO (plus one-way functions).)

Second, it is known that the NOVY's commitment scheme with one-way permutations can be improved to that with regular one-way functions~\cite{EC:HHKKMS05} or even any one-way functions~\cite{HNORV09}.\footnote{We say that a one-way functions is regular if the preimage sizes are equal for all images.}
Can we construct remote state preparations or proofs of quantumness from trapdoor functions, not permutations?
Known instantiations of trapdoor permutations are only factoring-based ones (for full-domain cases) and iO-based ones (for non-full-domain cases).
However, trapdoor functions have many instantiations from
Diffie-Hellman assumptions~\cite{STOC:PeiWat08,C:GarHaj18},
learning with errors~\cite{STOC:PeiWat08,STOC:GenPeiVai08},
NTRU~\cite{ANTS:HofPipSil98}, 
and coding theory \cite{McEliece,Niederreiter}, etc.
A potential approach is to coherently run the variant of the NOVY's commitment based on regular one-way functions~\cite{EC:HHKKMS05} since trapdoor functions are automatically regular. However, in such a construction, the state which the honest Bob gets after the interaction is not a superposition of two computational-basis states, $|x_0\rangle+|x_1\rangle$, 
but that of many computational-basis states, i.e., $\sum_{x\in S}|x\rangle$ with $|S|$ being polynomially or even exponentially large.
We do not know how to construct proofs of quantumness from such a superposition of many computational-basis states.

Finally, can we remove even the trapdoor? Is it possible to construct remote state preparations or proofs of quantumness
from one-way functions? 
A recent work \cite{cryptoeprint:2022/434} constructs proofs of quantumness in the quantum random oracle model. This demonstrates that trapdoors are not inherent for proofs of quantumness. However, a standard-model instantiation of their protocol based on standard assumptions is likely to require a completely new idea.

\paragraph{\bf Other applications.}
The other important open problem is whether
we can construct other quantum cryptographic primitives, such as (classical-client) blind quantum computing
and (classical) verifications of quantum computing from trapdoor permutations (or even trapdoor/one-way functions). 
For that goal, we have to show the {\it quantum} security of the NOVY's scheme.
(Remember that our remote state preparations are secure only against {\it classical} Bob, because we do not show the security of the NOVY's scheme
against quantum adversaries.)
The security proof of \cite{C:NOVY92} makes heavy use of the rewinding technique, and therefore
showing the quantum security of NOVY's scheme seems to be a challenging open problem, which is of independent interest.

\if0
Finally, is it possible to remove even the trapdoor property?
The current strategy doesn't seem to work without trapdoor, because of the following reason: 
In our protocol (and also existing protocols), Alice and Bob interact with each other and then Bob generates $\ket{x_0}+\ket{x_1}$ (without knowing $x_0$ and $x_1$) whereas Alice knows $(x_0,x_1)$. 
This functionality immediately implies a key exchange protocol (with imperfect completeness): Bob measures its state and let the measurement outcome be the shared key and Alice uniformly chooses $x_0$ or $x_1$ and then set the chosen one to be the shared key. This protocol has completeness $1/2$. On the other hand, Eve can guess the shared key with a negligible probability. If Eve's guessing probability is non-negligible, then by honestly running Bob and running Eve on the transcript, we can get both $x_0$ and $x_1$ with non-negligible probability.
\mor{this is for quantum secure RPS. How about classical secure RSP?}
\takashi{I agree. We may remove this paragraph.}
\fi

\subsection{Related Works}
\paragraph{\bf Remote state preparations.}
The first construction of (non-verifiable) remote state preparations of single-qubit states so called ``QFactory''~\cite{AC:CCKW19} used 
certain 2-to-1 trapdoor collision resistant hash functions with some homomorphic predicates, which can be constructed from the LWE assumption.
\cite{FOCS:GheVid19,cryptoeprint:2022/432,GMP22}
constructed verifiable remote state preparations of single-qubit states
that use the (noisy) trapdoor injective claw-free functions of \cite{FOCS:BCMVV18}.

\paragraph{\bf Proofs of quantumness.}
A simple way of achieving ``proofs of quantumness'' is to ask the prover to solve (non-interactive) problems in $\NP$ that
can be solved in quantum polynomial-time but are believed to be hard for probabilistic polynomial-time, such as
factoring~\cite{FOCS:Shor94}, Pell's equation~\cite{STOC:Hallgren02}, and matrix group membership~\cite{STOC:BabBeaSer09}, etc.
In this paper, however, we do not consider such approaches.

The original construction~\cite{JACM:BCMVV21} of proofs of quantumness required the adaptive-hardcore-bit property.
\cite{TQC:BKVV20,NatPhys:KMCVY22} removed the necessity of the adaptive-hardcore-bit property,
but \cite{TQC:BKVV20} used 2-to-1 trapdoor injective claw-free functions and random oracles that can be queried coherently, 
and \cite{NatPhys:KMCVY22} used 2-to-1 trapdoor collision resistant hash functions.

Publicly-verifiable proofs of quantumness were also studied.
One-shot signatures~\cite{STOC:AGKZ20} imply proofs of quantumness, but
the known construction of one-shot signatures is based on one-shot chameleon hash functions, which 
satisfy the collision resistance. Moreover, the known construction assumes classical oracles that can be
queried coherently.
\cite{cryptoeprint:2022/434} constructed publicly-verifiable non-interactive proofs of quantumness
with random oracles. Note that random oracles are collision-resistant.

Recently, \cite{cryptoeprint:2022/400} showed a general compiler to transform non-local games to proofs of quantumness via quantum homomorphic encryptions (QHE).
Their assumption is only the existence of QHE for certain class of quantum operations (such as controlled-Hadamard gates), which can be instantiated with LWE~\cite{FOCS:Mahadev18b}. 
Although homomorphic encryptions generally imply the collision resistance~\cite{TCC:IshKusOst05}, it is not known whether the restricted QHE used in \cite{cryptoeprint:2022/400} implies the collision resistance.

The idea of \cite{cryptoeprint:2022/400} is that the quantum prover generates a bipartite state, and the classical verifier 
remotely measures one of the registers via QHE so that the prover gets an unknown state of the other register.
It therefore can be also considered as (non-verifiable) remote state preparations via QHE.
Remote state preparations via QHE were also introduced in \cite{cryptoeprint:2021/1427,cryptoeprint:2022/228} in the contexts of quantum money and quantum tokenized signatures.

\if0
Shumueli constructed semi-classical tokenized signature that implies proof of quantumness.
His semi-classical tokenized signature assumes QHE, and therefore the proof of quantumness
is based on a quantum assumption.

Hiroka-Morimae-Nishimaki-Yamakawa constructed a PKE with certified deletion over classical channel.
It implies a proof of quantumness, but

Qlightning?
\fi

\section{Preliminaries}
We use the standard notations of quantum computing and cryptography.
We use $\secp$ as the security parameter.
$x\gets A$ means that an element $x$ is sampled uniformly at random from the set $A$.
$\negl$ is a negligible function, and $\poly$ is a polynomial.
For an algorithm $A$, $y\gets A(x)$ means that the algorithm $A$ outputs $y$ on input $x$.
For two bit strings $a,b\in\bit^n$, $a\cdot b$ is the bitwise inner product, i.e., $a\cdot b\coloneqq\bigoplus_{j=1}^n a_jb_j$.

\if0
\begin{definition}[6-round Proof of Quantumness]
A 6-round proof of quantumness (PoQ) is a tuple of algorithms $(\Gen,\Comm,\Ans_1,\Ans_2,\Ver)$.
\begin{itemize}
\item
$\Gen(1^\secp)\to (\sk,\pk):$ It is a PPT algorithm that, on input the security parameter $\secp$, outputs a secret key $\sk$ and a public key $\pk$.
\item
$\Comm(\pk)\to (\comm,\st_{\comm}):$ It is a QPT algorithm that, on input $\pk$, outputs a bit string $\comm$ and a quantum state $\st_{\comm}$.
\item
$\Ans_1(c,\st_{\comm})\to (a,\st_{\comm}'):$ It is a QPT algorithm that, on input $\st_{\comm}$ and a single bit $c$, outputs a bit string $a$ and a quantum state $\st_{\comm}'$.
\item
$\Ans_2(c',\st'_{\comm})\to a':$ It is a QPT algorithm that, on input $\st'_{\comm}$ and a single bit $c'$, outputs a bit string $a'$.
\item
$\Ver(\sk,\comm,c,c',a,a')\to \top/\bot:$ It is a PPT algorithm that, on input $\sk$, $\comm$, $c$, $c'$, $a$, and $a'$, outputs $\top$ or $\bot$.
\end{itemize}

{\bf $\alpha$-correctness:}

\begin{eqnarray*}
\Pr\left[
\top\leftarrow \Ver(\sk,\comm,c,c',a,a'):
\begin{array}{ll}
&(\sk,\pk)\leftarrow \Gen(1^\secp),\\
&(\comm,\st_{\comm})\leftarrow \Comm(\pk),\\
&c\leftarrow \{0,1\},\\
&(a,\st'_{\comm})\leftarrow \Ans(c,\st_{\comm})
&c'\leftarrow \{0,1\},\\
&a'\leftarrow \Ans(c',\st'_{\comm})
\end{array}
\right]\ge \alpha.
\end{eqnarray*}

{\bf $\beta$-soundness:}
For any PPT adversaries $\cA_1$ and $\cA_2$,
\begin{eqnarray*}
\Pr\left[
\top\leftarrow \Ver(\sk,\comm,c,a):
\begin{array}{ll}
&(\sk,\pk)\leftarrow \Gen(1^\secp),\\
&(\comm,\st_{\comm})\leftarrow \A_1(\pk),\\
&c\leftarrow \{0,1\},\\
&a\leftarrow \cA_2(c,\st_{\comm})
\end{array}
\right]
\le \beta.
\end{eqnarray*}
(Note that in this case $\st_{\comm}$ is a classical bit string.)

We require that $\alpha-\beta\ge1/\poly(\secp)$.
\end{definition}
\fi

Non-verifiable remote state preparations of $|x_0\rangle+|x_1\rangle$ secure against probabilistic polynomial-time Bob are defined as follows.
\begin{definition}[Remote state preparations]
\label{def:RSP}
Non-verifiable remote state preparations of $|x_0\rangle+|x_1\rangle$ secure against probabilistic polynomial-time Bob
are two-party interactive protocols
between probabilistic polynomial-time Alice and quantum/probabilistic polynomial-time Bob over a classical channel that
satisfy the following
two conditions.

\paragraph{\bf Perfect correctness:}
If quantum polynomial-time Bob behaves honestly, 
Alice outputs a pair $\{x_0,x_1\}$ of two $n$-bit strings $x_0,x_1\in\bit^n$ and
Bob outputs the $n$-qubit state $|x_0\rangle+|x_1\rangle$ with probability 1.

\paragraph{\bf Classical security (blindness):}
For any probabilistic polynomial-time malicious Bob that outputs a pair $\{\alpha,\beta\}$ of two $n$-bit strings $\alpha,\beta\in\bit^n$, 
\begin{eqnarray*}
\Pr[\{x_0,x_1\}=\{\alpha,\beta\}:\{x_0,x_1\}\gets\mbox{Alice},\{\alpha,\beta\}\gets\mbox{Bob}]\le\negl(\secp).
\end{eqnarray*}
\end{definition}

\begin{remark}
Our definition is different from previous ones \cite{AC:CCKW19,FOCS:GheVid19} in the following two points.
First, they are interested in remotely generating single-qubit states while we consider remote generations of $|x_0\rangle+|x_1\rangle$. 
Second, \cite{AC:CCKW19,FOCS:GheVid19} consider the security against quantum Bob, while we consider the one against only
{\it classical} Bob. It is an important open problem whether we can show the quantum security.
\end{remark}

\begin{remark}
Remote state preparations can have the {\it verifiability}~\cite{FOCS:GheVid19}, which roughly means that Alice can
check whether Bob has generated correct states or not. Some applications, such as classical-client blind quantum computing,
do not require the verifiability, but it seems that verifiability is necessary for some applications, such as classical verifications of
quantum computing. In this paper, we do not consider verifiability.
\end{remark}

\begin{remark}
We can consider non-perfect correctness, but in this paper we consider only perfect correctness, because our construction satisfies it.
Moreover, it is reasonable to assume in the definition that Alice sometimes outputs $\bot$.
However, for simplicity, we assume that Alice never outputs $\bot$. In fact, our construction satisfies it.
\end{remark}

Proofs of quantumness are defined as follows.
\begin{definition}[Proofs of Quantumness]
\label{def:PoQ}
Proofs of quantumness are two-party protocols between a probabilistic polynomial-time verifier $\cV$ and a quantum/probabilistic polynomial-time
prover $\cP$ over a classical channel
such that $\cV$ finally outputs $\top$ or $\bot$.
We require that the following two conditions, $\alpha$-correctness and $\beta$-soundness, are satisfied for
some $\alpha$ and $\beta$ such that $\alpha-\beta\ge\frac{1}{\poly(\secp)}$.

\paragraph{\bf $\alpha$-correctness:}
There exists a quantum polynomial-time prover $\cP$ such that
$\Pr\left[\cV\to\top\right] \ge \alpha$.

\paragraph {\bf $\beta$-soundness:}
For any probabilistic polynomial-time prover $\cP$, 
$\Pr\left[\cV\to\top\right]
\le \beta$.
\end{definition}

Classically-secure full-domain trapdoor permutations are defined as follows. 

\begin{definition}[Trapdoor Permutations]
\label{def:TDP}
A family $\{f_k\}_{k\in {\mathcal K}_\secp}$ of permutations is called a classically-secure full-domain trapdoor permutation family if
there is a tuple of algorithms $(\Gen,\Eval,\Inv)$ such that
\begin{itemize}
    \item 
    $\Gen(1^\secp)\to(\td,k):$ It is a probabilistic polynomial-time algorithm that, on input the security parameter $\secp$,
    outputs a trapdoor $\td$ and a key $k$.
    \item
    $\Eval(x,k)\to x':$ It is a classical polynomial-time deterministic algorithm that, on input $k$ and a bit string $x\in\{0,1\}^n$,
    outputs a bit string $x'\in\{0,1\}^n$.
    \item
    $\Inv(\td,x')\to x'':$ It is a classical polynomial-time deterministic algorithm that, on input $x'$ and $\td$, outputs a bit string $x''$.
\end{itemize}

We require the following two types of correctness and the security.

\paragraph{\bf Evaluation correctness:}
For any $x\in\{0,1\}^n$,
\begin{eqnarray*}
\Pr[x'=f_k(x):
x'\leftarrow \Eval(x,k),
(\td,k)\leftarrow \Gen(1^\secp)
]=1.
\end{eqnarray*}

\paragraph{\bf Inversion correctness:}
For any $x\in\bit^n$,
\begin{eqnarray*}
\Pr[x''=x:
x''\leftarrow \Inv(\td,x'),
x'\leftarrow \Eval(x,k),
(\td,k)\leftarrow \Gen(1^\secp)
]=1.
\end{eqnarray*}

\paragraph{\bf Classical security:} 
For any probabilistic polynomial-time adversary $\cA$,
\begin{eqnarray*}
\Pr[x''=x:
x''\leftarrow \cA(x',k),
x'\leftarrow \Eval(x,k),
x\leftarrow \{0,1\}^n,
(\td,k)\leftarrow \Gen(1^\secp)
]\le \negl(\secp).
\end{eqnarray*}
\end{definition}

We use the following result from \cite{C:NOVY92}.
(They show it for one-way permutations, not trapdoor permutations, but it is clear that the same result holds
for the trapdoor permutations.)
\begin{theorem}[\cite{C:NOVY92}]
\label{theorem:NOVY}
Let $\{f_k\}_{k\in{\mathcal K}_\secp}$ be a classically-secure full-domain trapdoor permutation family.
Let $(\Gen,\Eval,\Inv)$ be the associated tuple of algorithms.
Let us consider the following security game between a probabilistic polynomial-time challenger $\cC$ and a probabilistic polynomial-time adversary $\cA$.
\begin{enumerate}
    \item 
    $\cC$ runs $(\td,k)\leftarrow \Gen(1^\secp)$. 
    $\cC$ sends $k$ to $\cA$.
    \item
    $\cC$ chooses $h_j\gets0^{j-1}1\bit^{n-j}$ for each $j\in\{1,2,...,n-1\}$.
    \item
    $\cC$ and $\cA$ repeat the following for $j=1,2,...,n-1$:
    \begin{enumerate}
        \item $\cC$ sends $h_j$ to $\cA$.
        \item $\cA$ sends $c_j\in\bit$ to $\cC$.
    \end{enumerate}
    \item
    $\cA$ sends $\alpha,\beta\in\bit^n$ to $\cC$.
    \item
    There exist exactly two $y_0,y_1\in\bit^n$ such that $h_j\cdot y_b=c_j$ for all $b\in\bit$ and all $j\in\{1,2,...,n-1\}$.
    $\cC$ outputs $\top$ if $f_k(\alpha)=y_0$ and $f_k(\beta)=y_1$,
    or
    $f_k(\alpha)=y_1$ and $f_k(\beta)=y_0$.
    Otherwise, $\cC$ outputs $\bot$.
\end{enumerate}
Then, for any probabilistic polynomial-time adversary $\cA$,
$\Pr[\cC\to\top]\le\negl(\secp)$.
\end{theorem}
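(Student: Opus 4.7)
The plan is to reduce winning this security game to inverting the trapdoor permutation $f_k$, thereby contradicting its classical security. Since the game is exactly the binding game for the NOVY commitment scheme~\cite{C:NOVY92}, the entire argument is the one given there; the only change in the trapdoor-permutation setting is that the challenger additionally samples (but never publishes) the trapdoor $\td$. I will make this explicit and isolate the technical step inherited from NOVY.

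Assume toward contradiction that some probabilistic polynomial-time adversary $\cA$ wins with non-negligible probability $\epsilon(\secp)$. I would build an inverter $\cI$ that, on input $(k, y^*)$ with $y^* = f_k(x^*)$ for a uniformly random $x^* \in \bit^n$, outputs $x^*$ with non-negligible probability, contradicting the classical security of $\{f_k\}$. The inverter plays the role of $\cC$ to $\cA$ using only $k$: it forwards $k$ in step~1, samples $h_j \gets 0^{j-1}1\bit^{n-j}$ and exchanges them in step~3, and inspects the final output $(\alpha, \beta)$ in step~4. Observe that $\cI$ never needs $\td$, since the honest challenger also never uses $\td$ in any protocol message; this is the only place where the trapdoor-permutation versus one-way-permutation distinction could have mattered, and it is inert.

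The technical core, lifted from~\cite{C:NOVY92}, is to embed $y^*$ into the interactive hashing so that, with probability polynomially related to $\epsilon$, $y^*$ is one of the two surviving solutions $\{y_0, y_1\}$ of the linear system $\{h_j \cdot y = c_j\}_{j=1}^{n-1}$. NOVY accomplish this by picking a random round $j^* \in \{1,\dots,n-1\}$, simulating rounds $1,\dots,j^*-1$ honestly, and then running a forking/rewinding procedure at round $j^*$ that resamples $\cA$'s continuations on fresh challenges; a tree-counting analysis shows that the pair of extracted preimages across the two branches covers a targeted image with non-negligible probability. Conditioned on such a successful fork, the winning condition forces $\{f_k(\alpha), f_k(\beta)\} = \{y_0, y_1\} \ni y^*$, so $\cI$ identifies the inverse of $y^*$ simply by evaluating $f_k(\alpha)$ and $f_k(\beta)$ with $k$ and returning the matching one.

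The hard part is the combinatorial/rewinding analysis at round $j^*$: quantifying how a polynomially related fraction of $\cA$'s winning executions can be fork-extracted into an inversion of a specified $y^*$. This is exactly the lemma proved in~\cite{C:NOVY92}, and because it only manipulates the transcript distribution and the public key $k$ (never the trapdoor), it transfers verbatim to the trapdoor-permutation setting, yielding $\Pr[\cC \to \top] \le \negl(\secp)$ as claimed.
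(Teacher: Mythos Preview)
The paper does not prove this theorem: it is stated as a result of~\cite{C:NOVY92} with the one-line remark that the presence of an (unused) trapdoor changes nothing. Your proposal makes exactly that observation, so on the point the paper actually addresses you agree with it.

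Your additional sketch of the NOVY reduction, however, misdescribes the mechanism. NOVY do not pick a single random round $j^*$ and fork once there; a one-round fork cannot force the target $y^*$ to satisfy all $n-1$ linear constraints $h_j\cdot y^* = c_j$, so there is no reason $y^*$ would land in either branch's final pair. The actual reduction rewinds at \emph{every} round: for each $j$ it repeatedly resamples $h_j$ and replays $\cA$ until the reply $c_j$ happens to equal $h_j\cdot y^*$, so that $y^*$ remains in the surviving affine subspace throughout and is guaranteed to be one of the two final solutions. The delicate part of~\cite{C:NOVY92} is bounding the number of rewinds per round and showing that $\cA$'s success probability does not collapse under this conditioning (this is where the specific structure $h_j\in 0^{j-1}1\bit^{n-j}$ and the resulting linear independence are used). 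Since you ultimately defer to~\cite{C:NOVY92} for the analysis, your proposal is acceptable at the level the paper operates at, but the sketch as written would not yield a proof if taken literally.
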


\section{Proof of \cref{theorem:result1}}
\label{sec:proof1}
In this section, we provide a proof of \cref{theorem:result1}.

\begin{proof}[Proof of Theorem~\ref{theorem:result1}]
Let $\{f_k\}_{k\in\cK_\secp}$ be a classically-secure full-domain trapdoor permutation family.
Let $(\Gen,\Eval,\Inv)$ be the associated tuple of algorithms.
From them, we construct non-verifiable remote state preparations of $|x_0\rangle+|x_1\rangle$ secure against probabilistic polynomial-time Bob as follows.
\begin{enumerate}
    \item 
    Alice runs $(\td,k)\leftarrow \Gen(1^\secp)$.
    Alice sends $k$ to Bob.
    \item
    Alice chooses $h_j\leftarrow 0^{j-1}1\{0,1\}^{n-j}$ for each $j\in\{1,2,...,n-1\}$.
    \item
    Alice and Bob repeat the following for $j=1,2,...,n-1$:
    \begin{enumerate}
    \item
    Alice sends $h_j$ to Bob.
    \item
    Bob possesses the state $\sum_{x\in X_{j-1}}|x\rangle$, where
    \begin{eqnarray*}
    X_j\coloneqq\Big\{x\in\bit^n~\Big|~\bigwedge_{i=1}^j(h_i\cdot f_k(x)=c_i)\Big\}.
    \end{eqnarray*}
    Bob generates
    \begin{eqnarray*}
    \sum_{x\in X_{j-1}}|x\rangle|h_j\cdot f_k(x)\rangle, 
    \end{eqnarray*}
    measures the second register to get the measurement result $c_j\in\bit$, and
    sends $c_j$ to Alice.
    The post-measurement state of the first register is
    \begin{eqnarray*}
    \sum_{x\in X_j}|x\rangle. 
    \end{eqnarray*}
   \end{enumerate} 
    \item
    Bob finally gets the state $|x_0\rangle+|x_1\rangle$,
    where there are exactly two bit strings $y_0,y_1\in\bit^n$ such that
    $h_j\cdot y_b=c_j$ for all $b\in\bit$ and all $j\in\{1,2,...,n-1\}$,
    and $f_k(x_0)=y_0$ and $f_k(x_1)=y_1$. 
    Bob outputs the state.
    \item
    From $\td$, $\{h_j\}_{j=1}^{n-1}$ and $\{c_j\}_{j=1}^{n-1}$,
    Alice computes $\{x_0,x_1\}$ and outputs it.
    \end{enumerate}
   The perfect correctness is clear. The classical security is obtained from \cref{theorem:NOVY}.
   \end{proof}

\section{Proof of \cref{theorem:result2}}  
\label{sec:proof2}

In this section, we show \cref{theorem:result2}.

   \begin{proof}[Proof of \cref{theorem:result2}]
Let us consider the following construction of proofs of quantumness, which is similar to that of \cite{NatPhys:KMCVY22}.
   \begin{enumerate} 
   \item \label{item:RSP}
   The verifier $\cV$ and the prover $\cP$ run non-verifiable remote state preparations of $|x_0\rangle+|x_1\rangle$ secure against probabilistic polynomial-time Bob whose existence is guaranteed from the existence of
   classically-secure full-domain trapdoor permutations due to \cref{theorem:result1}.
   $\cV$ gets a pair $\{x_0,x_1\}$ of $n$-bit strings $x_0,x_1\in\bit^n$. Honest $\cP$ gets the state $|x_0\rangle+|x_1\rangle$. 
    \item
    $\cV$ chooses $v_1\leftarrow \bit$. 
    $\cV$ chooses $r\leftarrow \bit^n$. 
    $\cV$ sends $v_1$ and $r$ to $\cP$. 
    \item \label{item:v1}
    \begin{itemize}
    \item
    If $v_1=0$: $\cP$ measures $|x_0\rangle+|x_1\rangle$ in the computational basis to get the measurement result $x\in\{0,1\}^n$.
    $\cP$ sends $x$ to $\cV$. If $x\in\{x_0,x_1\}$, $\cV$ outputs $\top$ and terminates the protocol.
    Otherwise, $\cV$ outputs $\bot$ and aborts.
    \item
    If $v_1=1$: $\cP$ changes $|x_0\rangle+|x_1\rangle$ into
    \begin{eqnarray*}
    |r\cdot x_0\rangle|x_0\rangle+|r\cdot x_1\rangle|x_1\rangle,
    \end{eqnarray*}
    measures its second register in the Hadamard basis to get the measurement result $d\in\{0,1\}^n$,
    and sends $d$ to $\cV$.
    The post-measurement state of the first register is
    \begin{eqnarray*}
    |r\cdot x_0\rangle+(-1)^{d\cdot(x_0\oplus x_1)}|r\cdot x_1\rangle.
    \end{eqnarray*}
    \end{itemize}
    \item
    $\cV$ chooses $v_2\leftarrow\bit$. $\cV$ sends $v_2$ to $\cP$.
    \item \label{item:v2}
    $\cP$ measures the state in the basis 
    \begin{eqnarray*}
\left\{\cos\frac{\pi}{8}|0\rangle+\sin\frac{\pi}{8}|1\rangle,
\sin\frac{\pi}{8}|0\rangle-\cos\frac{\pi}{8}|1\rangle\right\}
\end{eqnarray*}
if $v_2=0$, and
in the basis 
\begin{eqnarray*}
\left\{\cos\frac{\pi}{8}|0\rangle-\sin\frac{\pi}{8}|1\rangle,
\sin\frac{\pi}{8}|0\rangle+\cos\frac{\pi}{8}|1\rangle\right\}
\end{eqnarray*}
if $v_2=1$.
    Let $\eta\in\bit$ be the measurement result. (For the measurement in the basis $\{|\phi\rangle,|\phi^\perp\rangle\}$, the result 0 corresponds to $|\phi\rangle$ and the result 1 corresponds to $|\phi^\perp\rangle$.)
    $\cP$ sends $\eta$ to $\cV$.
    \item
    $\cV$ outputs $\top$ if 
    \begin{eqnarray*}
    (r\cdot x_0=r\cdot x_1) \wedge
    (\eta=r\cdot x_0),
    \end{eqnarray*}
    or
    \begin{eqnarray*}
    (r\cdot x_0\neq r\cdot x_1) \wedge
    (\eta=v_2\oplus d\cdot(x_0\oplus x_1)).
    \end{eqnarray*}
    Otherwise, it outputs $\bot$. 
\end{enumerate}

For the correctness, 
a straightforward calculation similarly to \cite{NatPhys:KMCVY22} shows that
\begin{eqnarray*}
\Pr[\cV\to\top]=\frac{1}{2}+\frac{1}{2}\cos^2\frac{\pi}{8}\simeq0.925 .
\end{eqnarray*}

For the soundness, by almost the same argument as that in \cite{NatPhys:KMCVY22}, we can show that for any probabilistic polynomial-time cheating prover, we have 
\begin{eqnarray*} 
\Pr[\cV\to\top]\leq \frac{7}{8}+\negl(\secp).
\end{eqnarray*}
Note that $\frac{7}{8}=0.875<0.925$.
We include the full proof in Appendix~\ref{app:soundness} for completeness. 


\end{proof}

\if0
\begin{proof}
Let $(\TD.\Gen,\TD.\Eval,\TD.\Inv)$ be a trapdoor permutation family.
From it, we construct a PoQ $(\PoQ.\Gen,\PoQ.\Comm,\PoQ.\Ans,\PoQ.\Ver)$ as follows.
\begin{itemize}
    \item 
    $\PoQ.\Gen(1^\secp)\to (\sk,\pk):$
    Run $(\td,k)\leftarrow \TD.\Gen(1^\secp)$.
    Choose $h^j\leftarrow 0^{j-1}1\{0,1\}^{n-j}$ for $j\in[1,2,...,n-1]$.
    Output $\sk:=\td$ and $\pk:=(k,\{h^j\}_{j=1}^{n-1})$.
    \item
    $\PoQ.\Comm(\pk)\to(\comm,\st_{\comm}):$ 
    Generate 
    \begin{eqnarray*}
    \frac{1}{\sqrt{2^n}}\sum_x|x\rangle|h^1\cdot f(x),h^2\cdot f(x),...,h^{n-1}\cdot f(x)\rangle. 
    \end{eqnarray*}
    Measure the second register to get the measurement result $c\in\{0,1\}^{n-1}$.
    Let $|\psi_c\rangle=\frac{1}{\sqrt{2}}(|x_0\rangle+|x_1\rangle)$ be the post-measurement state of the first register.
    Here, $f(x_0)=y_0$ and $f(x_1)=y_1$, where $y_0$ and $y_1$ are the solutions of the linear equations
    $\{h^j\cdot f(x)=c_j\}_{j=1}^{n-1}$.
    Output $\comm:=c$ and $\st_{\comm}:=|\psi_c\rangle$.
    \item
    $PoQ.Ans_1(c,st_{com})\to (a,st_{com}'):$
    If $c=0$, measure $|\psi_c\rangle$ in the computational basis to get the measurement result $x\in\{0,1\}^n$.
    Output $a:=x$.
    If $c=1$, generate
    \begin{eqnarray*}
    \frac{1}{\sqrt{2}}(|r\cdot x_0\rangle|x_0\rangle+|r\cdot x_1\rangle|x_1\rangle).
    \end{eqnarray*}
    Measure the second register in the Hadamard basis to get the measurement result $d\in\{0,1\}^n$.
    The post-measurement state of the first register is
    \begin{eqnarray*}
    |\psi_2\rangle:=\frac{1}{\sqrt{2}}(|r\cdot x_0\rangle+(-1)^{d\cdot(x_0\oplus x_1)}|r\cdot x_1\rangle).
    \end{eqnarray*}
    Output $a:=d$ and $st_{com}':=|\psi_2\rangle$.
    \item
    $PoQ.Ans_2(m,st_{com}')\to a':$
    Measure $|\psi_2\rangle$ in the basis 
    \begin{eqnarray*}
\left\{|\eta^0_0\rangle:=\cos\frac{\pi}{8}|0\rangle+\sin\frac{\pi}{8}|1\rangle,
|\eta^0_1\rangle:=\sin\frac{\pi}{8}|0\rangle-\cos\frac{\pi}{8}|1\rangle\right\}
\end{eqnarray*}
if $m=0$, and
in the basis 
\begin{eqnarray*}
\left\{|\eta^1_0\rangle:=\cos\frac{\pi}{8}|0\rangle-\sin\frac{\pi}{8}|1\rangle,
|\eta^1_1\rangle:=\sin\frac{\pi}{8}|0\rangle+\cos\frac{\pi}{8}|1\rangle\right\}
\end{eqnarray*}
if $m=1$.
    Let $\eta$ be the measurement result. Output $a:=\eta$.
    \item
    $PoQ.Ver(sk,com,c,a)\to\top/\bot:$
\end{itemize}
\end{proof}
\fi

\if0
\subsection{RSP from trapdoor permutation}
\mor{dekiruka humei}
{\bf Theorem.}
If trandoor permutations exist, then RSP exists.

{\it Proof.}
??

\subsection{Trapdoor function}
\mor{dekiruka humei}
Trapdoor function demo dekiru? $|0\rangle|f^{-1}(y_0)\rangle+|1\rangle|f^{-1}(y_1)\rangle$
nanode huben.
(At least classical minting Q money?)
\fi

{\bf Acknowledgements.}
TM is supported by
JST Moonshot R\verb|&|D JPMJMS2061-5-1-1, 
JST FOREST, 
MEXT QLEAP, 
the Grant-in-Aid for Scientific Research (B) No.JP19H04066, 
the Grant-in Aid for Transformative Research Areas (A) 21H05183,
and 
the Grant-in-Aid for Scientific Research (A) No.22H00522.

\bibliographystyle{alpha} 
\bibliography{abbrev3,crypto,reference}

\appendix 
\section{Proof of Soundness}\label{app:soundness}
We give the omitted proof of the soundness of the proof of quantumness protocol given in Section~\ref{sec:proof2}. Note that this is almost identical to that in \cite{NatPhys:KMCVY22}. 

Our goal is to prove that 
for any probabilistic polynomial-time cheating prover,
\begin{eqnarray} \label{eq:goal}
\Pr[\cV\to\top]\leq \frac{7}{8}+\negl(\secp).
\end{eqnarray}
Toward contradiction, suppose that there is a probabilistic polynomial-time cheating prover $\cA$ and a polynomial $\poly$ such that 
\begin{eqnarray*}
\Pr[\cV\to\top]\geq  \frac{7}{8}+\frac{1}{\poly(\secp)}
\end{eqnarray*}
for infinitely many $\secp$. In the following, we focus on such $\secp$. 
Let $\mathsf{ST}_\A$ be $\A$'s state (including the transcript and its own randomness) right after finishing the remote state preparation protocol run in Step \ref{item:RSP} and $\{x_0,x_1\}$ be $\mathcal{V}$'s output for the remote state preparation protocol. Then, by a standard averaging argument, for  $\frac{1}{2\poly(\secp)}$-fraction of $(\mathsf{ST}_\A,\{x_0,x_1\})$, we have 
\begin{eqnarray} \label{eq:contradiction}
\Pr[\cV\to\top~|~(\mathsf{ST}_\A,\{x_0,x_1\})]\geq  \frac{7}{8}+\frac{1}{2\poly(\secp)},
\end{eqnarray}
where 
$\Pr[\cV\to\top~|~(\mathsf{ST}_\A,\{x_0,x_1\})]$ denotes $\cV$'s acceptance probability conditioned on a fixed $(\mathsf{ST}_\A,\{x_0,x_1\})$.
We fix such $(\mathsf{ST}_\A,\{x_0,x_1\})$ until Equation \ref{eq:succ_E}.

We define the following probabilities all of which are conditioned on the fixed value of $(\mathsf{ST}_\A,\{x_0,x_1\})$:
\begin{description}
\item[$p_0$:] The probability that $\cV$ returns $\top$ conditioned on   $v_1=0$.
\item[$p_1$:] The probability that $\cV$ returns $\top$ conditioned on  $v_1=1$.
\item[$p_{1,0}$:] The probability that $\cV$ returns $\top$ conditioned on $v_1=1$ and $v_2=0$.
\item[$p_{1,1}$:] The probability that $\cV$ returns $\top$ conditioned on  $v_1=1$ and $v_2=1$.
\end{description}
Clearly, we have 
\begin{eqnarray} \label{eq:p0p1}
\Pr[\cV\to\top|(\mathsf{ST}_\A,\{x_0,x_1\})]= \frac{p_0+p_1}{2}
\end{eqnarray}
and 
\begin{eqnarray}\label{eq:p10p11}
p_{1}= \frac{p_{1,0}+p_{1,1}}{2}.
\end{eqnarray}
By Inequality \ref{eq:contradiction}, Equation \ref{eq:p0p1}, and a trivial inequality $p_0,p_1\le 1$, we have 
\begin{align}\label{eq:p0_lowerbound}
    p_0 \geq \frac{3}{4}+\frac{1}{\poly(\secp)}
\end{align}  
and 
\begin{align}\label{eq:p1_lowerbound}
    p_1 \geq \frac{3}{4}+\frac{1}{\poly(\secp)}.
\end{align}

Let $\B$ be a classical deterministic polynomial-time algorithm that works as follows:
\begin{enumerate}
    \item $\B$ takes $\mathsf{ST}_{\A}$ and $r\in \bit^n$ as input.
     \item $\B$ runs Step \ref{item:v1} of $\A$ whose state is initialized to $\mathsf{ST}_\A$ where $\B$ plays the role of $\cV$ with $v_1=1$ and the given $r$.  Let $d\in \bit^n$ be the message sent from $\A$ to $\cV$ and $\mathsf{ST}'_{\A}$ be $\A$'s state at this point. 
    \item $\B$ runs Step~\ref{item:v2} of $\A$ whose state is initialized to $\mathsf{ST}'_\A$ where  $\B$ plays the role of $\cV$ with $v_2=0$. Let $\eta_{1,0}$ be the message sent from $\A$ to $\cV$. 
    \item  $\B$ runs Step \ref{item:v2} of $\A$ whose state is initialized to $\mathsf{ST}'_\A$ where  $\B$ plays the role of $\cV$ withs $v_2=1$. Let $\eta_{1,1}$ be the message sent from $\A$ to $\cV$.  We note that this step is possible because $\A$ is classical and in particular $\mathsf{ST}'_\A$ is classical and thus can be copied. 
    \item Output  $\eta_{1,0}\oplus \eta_{1,1}$. 
\end{enumerate} 
By the union bound, the probability that both $(d,\eta_{1,0})$ and $(d,\eta_{1,1})$ pass the verification is at least 
\begin{align*}
    1-(1-p_{1,0})-(1-p_{1,1})=-1+2p_1\geq \frac{1}{2}+\frac{1}{\poly(\secp)},
\end{align*}
where the equation follows from Equation~\ref{eq:p10p11} and
the inequality follows from Inequality \ref{eq:p1_lowerbound}. 
When this occurs, for each $v_2\in \bit$, we have
\begin{eqnarray*}
    (r\cdot x_0=r\cdot x_1) \wedge
    (\eta_{1,v_2}=r\cdot x_0),
    \end{eqnarray*}
    or
    \begin{eqnarray*}
    (r\cdot x_0\neq r\cdot x_1) \wedge
    (\eta_{1,v_2}=v_2\oplus d\cdot(x_0\oplus x_1)).
    \end{eqnarray*}
(Remark that the same $d$ is used for both cases of $v_2=0$ and $v_2=1$.) 
This implies that
\begin{align*}
    \eta_{1,0}\oplus \eta_{1,1}=r\cdot(x_0\oplus x_1).
\end{align*}
Therefore, we have 
\begin{align*}
    \Pr_{r\gets \bit^n}[\B(\mathsf{ST}_{\A},r)=r\cdot(x_0\oplus x_1)]\geq \frac{1}{2}+\frac{1}{\poly(\secp)}.
\end{align*}
Thus, by the Goldreich-Levin theorem~\cite{GL89}, there is a probabilistic polynomial-time algorithm $\mathcal{E}$ such that 
\begin{align} \label{eq:succ_E}
    \Pr[\mathcal{E}(\mathsf{ST}_{\A})=x_0\oplus x_1]\geq\frac{1}{\poly'(\secp)}
\end{align}
for some polynomial $\poly'$. (Remark that what we showed so far is that the above hold for $\frac{1}{2\poly(\secp)}$-fraction of $(\mathsf{ST}_{\A},\{x_0,x_1\})$.)

Then, we construct a probabilistic polynomial-time algorithm $\mathcal{C}$ that breaks the security of the remote state preparation protocol as follows:
\begin{enumerate}
    \item $\mathcal{C}$ interacts with $\cV$ in the same way as $\A$ does in Step \ref{item:RSP} of the proof of quantumness protocol. Let $\mathsf{ST}_\A$ be $\A$'s state after completing this stage. Note that $\{x_0,x_1\}$ is implicitly defined as an outcome of $\cV$ for the remote state preparation protocol.
    \item $\mathcal{C}$ runs $\A$ for $v_1=0$ and $r\gets \bit^n$ to get the response $x'$. 
    \item $\mathcal{C}$ runs $\mathcal{E}(\mathsf{ST}_\A)$ to get the output $z$. 
    \item $\mathcal{C}$ outputs $\{x',x'\oplus z\}$. 
\end{enumerate}
 For $\frac{1}{2\poly(\secp)}$-fraction of $(\mathsf{ST}_\A,\{x_0,x_1\})$, 
by Inequalities \ref{eq:p0_lowerbound} and \ref{eq:succ_E}, 
we have 
\begin{align*}
    \Pr[x'\in \{x_0,x_1\}|(\mathsf{ST}_\A,\{x_0,x_1\})]\geq  \frac{3}{4}+\frac{1}{\poly(\secp)}
\end{align*}
and 
\begin{align*}
    \Pr[z=x_0\oplus x_1 |(\mathsf{ST}_\A,\{x_0,x_1\})]\geq \frac{1}{\poly'(\secp)}.
\end{align*}
Moreover, the two events $x'\in \{x_0,x_1\}$ and $z=x_0\oplus x_1$ are independent once we fix $(\mathsf{ST}_\A,\{x_0,x_1\})$. 
Therefore, for $\frac{1}{2\poly(\secp)}$-fraction of $(\mathsf{ST}_\A,\{x_0,x_1\})$, we have 
\begin{align*}
    \Pr[
    x'\in \{x_0,x_1\} \wedge
    z=x_0\oplus x_1 |(\mathsf{ST}_\A,\{x_0,x_1\})]\geq \frac{3}{4\poly'(\secp)}.
\end{align*}
Therefore, we have 
\begin{align*}
    \Pr[\mathcal{C}\rightarrow \{x_0,x_1\}]\geq \frac{3}{8\poly(\secp)\poly'(\secp)}.
\end{align*}
This contradicts the security of the remote state preparation protocol (\cref{def:RSP}). Therefore, Equation~\ref{eq:goal} holds and the proof of soundness is completed. 

\setcounter{tocdepth}{2}
\tableofcontents
\end{document}